\newcolumntype{P}[1]{>{\centering\arraybackslash}p{#1}}
\definecolor{maiblue}{rgb}{0, 0., 0.69}
\definecolor{Gray}{gray}{0.925}
\newtheorem{lemma}{Lemma}
\title{BILP-Q: Quantum Coalition Structure Generation}
\author{
 Supreeth Mysore Venkatesh \\
  Department of Mathematics and Computer Science\\
   University of Saarland\\
  66123 Saarbrucken, Germany\\
  \texttt{s8sumyso@stud.uni-saarland.de} \\
   \And
 Antonio Macaluso \\
  Agents and Simulated Reality Department \\
  German Research Center for Artificial Intelligence (DFKI)\\
  66123 Saarbrucken, Germany\\
  \texttt{antonio.macaluso@dfki.de} \\
  \And
 Matthias Klusch \\
  Agents and Simulated Reality Department \\
  German Research Center for Artificial Intelligence (DFKI)\\
  66123 Saarbrucken, Germany\\
  \texttt{matthias.klusch@dfki.de} \\
}
\begin{document}
\maketitle
\begin{abstract}
Quantum AI is an emerging field that uses quantum computing to solve typical complex problems in AI. 
In this work, we propose BILP-Q, the first-ever general quantum approach for solving the Coalition Structure Generation problem (CSGP), which is notably NP-hard. In particular, we reformulate the CSGP in terms of a Quadratic Binary Combinatorial Optimization (QUBO) problem 
to leverage existing quantum algorithms (e.g., QAOA) to obtain the best coalition structure. Thus, we perform a comparative analysis in terms of time complexity between the proposed quantum approach and the most popular classical baselines.
Furthermore, we consider standard benchmark distributions for coalition values to test the BILP-Q on small-scale experiments using the IBM Qiskit environment. 
Finally, since QUBO problems can be solved operating with quantum annealing, we run BILP-Q on medium-size problems using a real quantum annealer (D-Wave).
\end{abstract}


\keywords{Quantum AI \and Quantum Computing \and Coalition Game Theory}

\section{Introduction}

Quantum computation leverages quantum mechanics laws to endow quantum machines with tremendous computing power, enabling the solution of problems impossible to address with classical devices.
These premises are hugely appealing for many real-world applications, especially when coming to the adoption of quantum computing in the domain of Artificial Intelligence (AI). For this reason, Quantum AI is attracting ever-increasing attention from the academic and private sectors, even if its full potential is still to be understood. However, a topic that is rarely covered regards the adoption of quantum algorithms in Coalition Game Theory \cite{moura2018game}.

\subsection{Background} \label{sec:background}


The Coalition Structure Generation problem (CSGP) consists of the formation of coalitions by agents such that the social welfare is maximized. In practice,
given a set of ~$n$~ agents ~$A=\{a_1,a_2,....a_n\}$~ and a characteristic function ~${v}:\mathcal{P}(A) \to \mathbb{R}$~, a coalition $C$ is a non-empty subset of ~$A$~. A Coalition Structure ($CS$) is a set of coalitions $\{C_1,C_2,...C_k\}$ such that ~$\bigcup_{i=1}^{k} C_{i} = A$~ and ~$C_i\cap C_j = \emptyset$~ for any ~$i,j \in \{1,2,...,k\}$~ and ~$i \neq j$~. The coalition value of a coalition structure is defined as ~${v}(CS) = \Sigma {v}(C)$~ for all ~$C \in CS$.
In a CSGP, given a set of agents and the characteristic function, the goal is to find the CS with the maximum coalition value. 

The CSGP can be also modeled as a Binary Integer Linear Programming (BILP) problem as follows \cite{rahwan2015coalition}. 
A $n \times (2^n-1)$ binary matrix $S$ is defined, where $n$ is the number of agents and $2^n-1$ are all possible coalitions of $n$ agents. The single entry $S_{i,j}$ is equal to $1$ if the agent $a_i$ belongs to the coalition  $C_j$ (i.e., $a_i \in C_j$), $0$ otherwise. 
The BILP formulation consists of finding a $(2^n-1)$ binary string $x=\{x_1,x_2,..,x_{2^n-1}\}$ such that: 
\begin{equation}\label{eq: CSGP as BILP}
\text{Maximize}  \sum_{j=1}^{2^n-1} v(C_j) x_j 
\end{equation}
\begin{equation}\label{eq: constraint CSGP as BILP}
\text{subject to}  \sum_{j=1}^{2^n-1} S_{i,j} x_j=1 , \quad x_j \in \{0,1\} \\ 
\end{equation}
\begin{equation*}
\text{for} \ i=1,2,....,n  \quad \quad
\text{for} \ j=1,2,.....,(2^n-1)
\end{equation*}
 
where ${v}:\mathcal{P}(A) \to \mathbb{R}$ is the characteristic function of the game, $n$ is the number of input agents and $2^n-1$ is the cardinality of the power set $\mathcal{P}(A)$ of $n$ agents (empty set excluded), i.e., the set of all possible coalitions.

There are two broad classes of solutions for solving the CSGP. Anytime optimal algorithms (IP) \cite{rahwan2007anytime} generate an initial set of possible solutions within a bound from the optimal, and then improve the quality of these solutions. The downside of IP is that these algorithms might end up searching the entire space of all possible coalition structures, which translates in a worst-case time complexity of $\mathcal{O}(n^n)$. An alternative approach consists of using Improved Dynamic Programming (IDP) \cite{rahwan2008improved} which avoids the evaluation of all possible solutions without losing the guarantees of finding the optimal coalition structure. 
Importantly, the state-of-the-art solution for CSGP is represented BOSS algorithm \cite{changder2021boss} which combines IP and IDP, inheriting the worst-time complexity of IDP, which is $\mathcal{O}(3^n)$.

\subsection{Contribution}

This work proposes BILP-Q, the first general quantum approach for solving the CSGP using quantum computation. In practice, we consider the CSGP as BILP, reformulate it as Quadratic Unconstrained Binary Optimization (QUBO) problem, and leverage the QAOA \cite{farhi2014quantum} as a method to find the optimal coalition structure. As a further contribution, we analyze BILP-Q in terms of gate complexity as a function of the number of agents in the coalition game and compare it with IP and IDP-BOSS. 
Furthermore, we perform small-scale experiments using IBM Qiskit to show the effectiveness of the proposed approach. Finally, since QUBO problems can be solved operating with quantum annealing,
we run BILP-Q on medium-size problems using a real quantum annealer device (D-Wave).

\section{Related works}
Recently, a specific formulation of the CSGP for quantum annealing has been proposed \cite{leon2017multiagent} and further improved \cite{leon2019expressing}. In this case, the CSGP is expressed as an undirected weighted graph $G = (V, W)$, where the set of nodes $V$ corresponds to the set of agents and the edges $W: V \to V$ represent possible coalition structures. The weight assigned to each edge is given by the characteristic function $v(C_i) = \sum_{(i,j) \in C_i}{w_{ij}}$. Therefore, the whole problem is reformulated as a minimization problem and mapped directly into the topology of a quantum annealer. 

We identify several drawbacks to this approach. First, the ability to solve a specific problem instance depends on the specific topology of the quantum annealer in use. This means that if the graph generated by a given problem does not fit the connectivity of the qubits in the available quantum device, it is impossible to run the algorithm.  
Second, the problem formulation requires two extra parameters,  $c_{max}$ and \textit{crossing number},  that control the maximum number of possible coalitions and the minimum number of intersecting edges, respectively. In practice, these parameters are unlikely to be known in real case scenarios. Third, the use cases explored consider only superadditive games \cite{rahwan2015coalition}.
 
Thus, the proposed formulation in \cite{leon2017multiagent, leon2019expressing} is not general-purpose, since it requires relevant apriori information and specific hardware architecture to solve the CSGP. 

\section{BILP-Q: Quantum Algorithmic Solution for CSGP}
 
 In this section, we propose  BILP-Q, a general QUBO formulation for CSGP, completely independent of the specific problem instance,  which allows the adoption of gate-based quantum algorithms (e.g., QAOA) and real quantum annealers (D-Wave) as methods to find the optimal coalition structure. 
 Furthermore, we analyze the computational complexity of BILP-Q that uses QAOA and compare it with classical baselines IP and IDP-BOSS.

\subsection{QUBO Formulation for CSGP}

 Starting from the BILP formulation in Eq. \eqref{eq: CSGP as BILP}, we rewrite the CSGP in terms of quadratic objective function, in matrix form, as follows:
  \begin{align}\label{eq: QUBO generic}
 \text{Maximize} & \quad  f^'(\boldsymbol{x}) = \boldsymbol{x}_{1 \times (2^n-1) }^t\boldsymbol{C}_{(2^n-1) \times (2^n-1) }\boldsymbol{x}_{(2^n-1) \times 1 } \\
 \text{subject to} & \quad \boldsymbol{S}_{n \times (2^n-1)}\boldsymbol{x}_{(2^n-1) \times 1 } = \boldsymbol{b}_{n \times 1 }, \label{eq: constraint quadratic}
 \end{align}

 where $\boldsymbol{x}$ is a $2^n-1$ binary vector, $\boldsymbol{C}$ is a $(2^n-1) \times (2^n-1)$ diagonal matrix whose entries are given by the characteristic function $v(C_j)$,  $\boldsymbol{b}$ is an all-ones vector and $\boldsymbol{S}$ is a binary matrix whose rows and columns represent the agents and all possible coalitions, respectively.
 The transformation of Eq. \eqref{eq: CSGP as BILP} as quadratic function is possible since $\boldsymbol{x}$ is a binary vector and for any element $x_j \in \boldsymbol{x}$,  $x^2_j = x_j$. 
 Importantly, the matrix $\boldsymbol{S}$ is highly sparse since each agent (row) belongs, at most, to half of all possible coalitions (columns)\footnote{Given a set $A$ of $n$ agents and the corresponding power set $\mathcal{P}(A)$ of size $2^n$, the number of subsets whose the generic agent $a_i$ belongs (for  $i \in \{1, \dots n \}$) is $2^{n−1}$ since this is equivalent as finding the number of subsets of a set with size $n-1$.}.
  Notice that, it is possible to express the maximization of $f^'(\boldsymbol{x})$ in Eq. \eqref{eq: QUBO generic} as the minimization of  $f(x)$ which is defined as $f(x) = -f^'(\boldsymbol{x})$.
 
 We embed the constrains of Eq. \eqref{eq: QUBO generic} into the objective function $f(\boldsymbol{x})$, adding a Lagrangian penalty term to shift from a constrained optimization problem to an unconstrained problem \cite{kochenberger2004unified}, providing a complete QUBO formulation for CSGP:
\begin{flalign}\label{eq: QUBO formulation CSGP}
        f(\boldsymbol{x}) & =  \boldsymbol{x}^t \boldsymbol{C} \boldsymbol{x} + \lambda \left(\boldsymbol{S}\boldsymbol{x}-\boldsymbol{b}\right)^t \left(\boldsymbol{S}\boldsymbol{x}-\boldsymbol{b}\right) \nonumber \\
        & =  \boldsymbol{x}^t \boldsymbol{C} \boldsymbol{x} + \lambda \left(\boldsymbol{x}^t \boldsymbol{S^t}\boldsymbol{S} \boldsymbol{x} - \boldsymbol{x}^t \boldsymbol{S}^t \boldsymbol{b} - \boldsymbol{b}^t \boldsymbol{S} \boldsymbol{x} +  \boldsymbol{b}^t\boldsymbol{b} \right) 
        \nonumber \\ & 
        = \boldsymbol{x}^t \boldsymbol{C} \boldsymbol{x} + \boldsymbol{x}^t \boldsymbol{D} \boldsymbol{x} + c  
        \nonumber \\ & 
        =\boldsymbol{x}^t \boldsymbol{Q} \boldsymbol{x} + c
\end{flalign}
where $Q \in \mathbb{R}^{(2^n - 1) \cross (2^n - 1)}$ is a symmetric matrix, $c$ is an additive constant that does not affect the optimization process and $\lambda$ is an arbitrarily large positive number, known as \textit{penalty parameter}, that allows to penalize all those solutions that are possible in the new QUBO formulation but explicitly forbidden in the previous formulations \eqref{eq: constraint CSGP as BILP} and \eqref{eq: constraint quadratic}. 
Thus, the original CSGP can be reformulated as a minimization of the QUBO problem in Equation \eqref{eq: QUBO formulation CSGP} as follows:
\begin{align} \label{eq: MAX Qubo for CSGP}
    \text{Minimize} \ f(\boldsymbol{x}) & = \boldsymbol{x}^t \boldsymbol{Q} \boldsymbol{x} = \sum_{i=1}^{2^n-1}c_{i} x_i +\sum_{1 \leq i < j< 2^n-1}^{2^n-1} q_{ij} x_i x_j
\end{align}
where the coefficients $c_i$ are proportional to the elements of the matrix $\boldsymbol{C}$ by the factor $\lambda$, and the binary string solution provides the encoding (matching to the columns of the matrix $\boldsymbol{S}$) to the best coalition structure (i.e., maximum coalition value). 
The new formulation for CSGP depends completely on the structure of the matrix $ \boldsymbol{Q}$, whose off-diagonal elements depend, in turn, on the original matrix $ \boldsymbol{S}$. 
As already mentioned in Sec. \ref{sec:background}, the sparsity of $\boldsymbol{S}$ (i.e., the number of non-zero elements) is strictly lower than its total entries, and this is also true for the upper/lower off-diagonal elements of $\boldsymbol{Q}$. In addition, in case of Constrained Coalition Formation \cite{bistaffa2017algorithms} the number of possible coalitions is further reduced and many of the entries in $ \boldsymbol{S}$ are forced to be equal $0$. In these cases, the matrix  $\boldsymbol{S}$ can be even more sparse (as well as $\boldsymbol{Q}$).
In the next section, we show that having an arbitrarily sparse matrix $\boldsymbol{Q}$ directly affect the gate complexity of QAOA when it is used 
in the context of BILP-Q.



\subsection{Solving CSGP using QAOA }
 
 Solving a QUBO problem is equivalent to finding the ground state of an Ising model Hamiltonian 
 where the binary variables $x_i$ are replaced by spin variables $z_i=2x_1-1$, $z_i \in \{-1, +1\}$. In particular, denoting as $\mathcal{I}$ the set of pair-wise interactions between spins (i.e., the number of non-zero off-diagonal elements in $\boldsymbol{Q}$), one can formulate the energy $E(\boldsymbol{z})$ of the spin system as follows:
 \begin{align} \label{eq: Ising model for CSGP}
    E(\boldsymbol{z}) = \sum_{i=1}^{2^n-1} h_{i} z_i + \sum_{(i,j) \in \mathcal{I}}^{i} J_{i,j} z_i z_j,
 \end{align}
where the ground state of the Hamiltonian corresponds to the solution of the original QUBO problem. 
Once the QUBO is formulated in terms of Ising Hamiltonian, one can leverage QAOA \cite{farhi2014quantum} to find the optimal solution. In practice, when using a $p$-level QAOA acting on a $2^n-1$ qubits ($n$ number of agents), the quantum system is evolved with a cost Hamiltonian $H_C$ and a mixing Hamiltonian $H_B$ $p$ times. 
These two Hamiltonians are constructed as follows. 
$H_B$ is fixed and results from the sum of all single qubits $\sigma^x$ operators $H_B=\sum_{j=1}^{2^n-1} \sigma_j^x$.
$H_C$, instead, is realized by replacing the binary variable with Pauli-$Z$ operations as $\frac{1}{2} \sum_{(i,j) \in \mathcal{I}} (1-\sigma_i^z \sigma_j^z)$. 
Starting from $H_C$ and $H_B$, two sets of parametrized unitary matrices 
$U(H_C, \gamma_j) = e^{-i \gamma_j H_C}$ and $U(H_B, \beta_j) = e^{-i \beta_j H_B}$, for $j = 1 \dots p$, are defined, which depend on the two sets of parameters $\boldsymbol{\gamma} = (\gamma_1, \dots, \gamma_p )$ and $\boldsymbol{\beta} = (\beta_1, \dots, \beta_p )$. 
In terms of quantum circuit, each of the Pauli-$X$ in $H_B$ is implemented with a single $R_X(\beta)$ rotation gate, while each of the two-qubit interactions in $H_C$ is implemented with two CNOT gates and a local $R_Z(\beta)$ single-qubit gate \cite{crooks2018performance}.
The standard approach for QAOA consists of generating the uniform superposition of all states in the computational basis:     $\ket{s} =  H^{\otimes 2^n-1} \ket{0}^{\otimes 2^n-1}$.
Then the unitaries $U(H_C, \gamma_j)$ and $U(H_B, \beta_j)$ are applied iteratively $p$ times to produce the variational quantum state: 
\begin{align}
    \hspace{-1em}\ket{\boldsymbol{\beta}, \boldsymbol{\gamma}} =  e^{-i \beta_p H_B} e^{-i \gamma_{p}  H_C} \ \cdots \ e^{-i \beta_1 H_B} e^{-i \gamma_1 H_C}\ket{s} =  U(\boldsymbol{\beta}, \boldsymbol{\gamma})\ket{s}
\end{align}
The third step consists of performing expectation measurement to all the qubits: 
\begin{align}
    F_p(\boldsymbol{\beta}, \boldsymbol{\gamma}) =  \bra{\boldsymbol{\beta}, \boldsymbol{\gamma}} M \ket{\boldsymbol{\beta}, \boldsymbol{\gamma}},
\end{align}
where $M$ is a generic measurement operator.
Once the quantum state has been measured, the two sets of parameters $\boldsymbol{\beta}$ and $\boldsymbol{\gamma}$ are updated using classical optimization and the whole process is repeated multiple times to find 
the value of $F_p(\boldsymbol{\beta}, \boldsymbol{\gamma})$ for the near-optimal values $(\boldsymbol{\beta}^*, \boldsymbol{\gamma}^*)$ that minimizes $ E(\boldsymbol{z})$. For more technical details see \cite{zahedinejad2017combinatorial, majumdar2021optimizing}.

As a consequence of the cost Hamiltonian $H_C$ being constructed on the interactions $\mathcal{I}$ of the Ising model \eqref{eq: Ising model for CSGP}, and the equivalence with QUBO formulation \eqref{eq: MAX Qubo for CSGP}, the sparsity of the matrix $\boldsymbol{Q}$ directly affects the number of gates in QAOA needed to solve the CSGP. In fact, the QUBO matrix $\boldsymbol{Q}$ of BILP-Q formulation is sparse by nature since the number of non-zero off-diagonal elements in the matrix $\boldsymbol{S}$ is generally high, and it can be arbitrarily increased to forbid specific coalitions in case of constrained coalition formation problems \cite{bistaffa2017algorithms}. This translates in a lower computational complexity (in terms of quantum gates) with respect to a generic QUBO formulation. 
As a result, there are cases where BILP-Q can outperform even the state-of-the-art solution when comparing the time complexity of the classical baselines with the gate complexity of BILP-Q as a function of the number of agents (Section \ref{sec:performance}).

Notably, the idea of identifying specific classes of problems where the sparsity affects the computational complexity of a quantum algorithm has shown good results in the context of fault-tolerant quantum machine learning \cite{macaluso2020quantum}.

\begin{figure}[h]
    \centering
    \includegraphics[scale=.7]{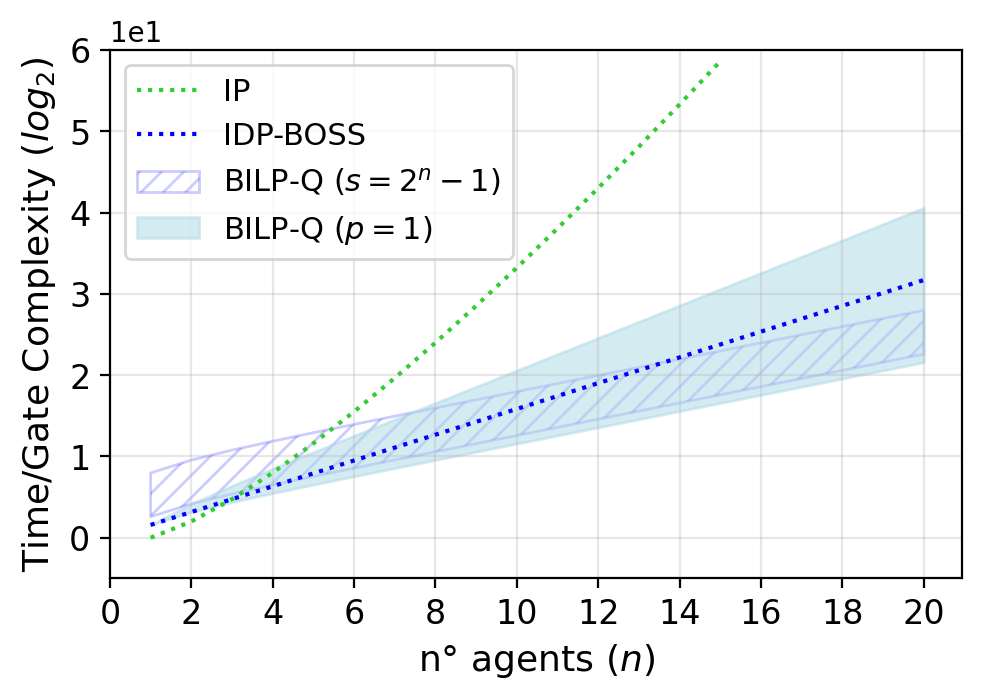}
    \caption{Cost complexity as a function of the number of agents $n$. The green curve represents IP $(\mathcal{O}(n^n))$, while the blue one is referred to the time complexity of the IDP-BOSS $(\mathcal{O}(3^n))$. The light blue shaded area and the dashed area illustrate the performance of BILP-Q while varying $s$ and $p$ respectively.}
    \label{fig:complexity}
\end{figure}

\subsection{Performance analysis}\label{sec:performance}

In this section, we analyze the computational complexity of BILP-Q for a generic CSGP instance. In particular, we express 
the gate complexity of BILP-Q as a function of the number of agents in the coalition game. Furthermore, we compare BILP-Q with the time complexity of popular classical solutions for CSGP.
However, when comparing classical and quantum algorithms, it is important to consider that quantum computing introduces a new complexity class, the \textit{Bounded-error Quantum Polynomial time}, representing the class of problems solvable in polynomial time by an innately probabilistic quantum Turing machine. 
\begin{lemma} \label{lemma:bilpq_number _of_qubits}
\textit{
For a $n$-agent CSGP, BILP-Q, that uses a $p$-layered QAOA, requires $\mathcal{O} \Big((2^n-1)( 2p+1) + 3ps \Big)$
single and/or two-qubit gates to compute the optimal coalition structure. The $s$ parameter is the number of non-zero elements in the lower/upper triangular matrix constructed from $\boldsymbol{Q}$. 
}
\end{lemma}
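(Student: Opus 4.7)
The plan is to prove this by an explicit gate count, decomposing the $p$-layered QAOA circuit into its three structural pieces: initial state preparation, $p$ copies of the cost unitary $U(H_C,\gamma_j)$, and $p$ copies of the mixing unitary $U(H_B,\beta_j)$, then summing.

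First I would argue that the circuit operates on exactly $2^n-1$ qubits (one per possible coalition), and that the initial state $\ket{s}$ is produced by a transversal Hadamard layer, contributing $(2^n-1)$ single-qubit gates. Next I would tackle the mixing unitary: since $H_B=\sum_{j=1}^{2^n-1}\sigma_j^x$ is a sum of commuting single-qubit Pauli-$X$ operators, $e^{-i\beta_j H_B}$ factorises as a product of $(2^n-1)$ independent $R_X(\beta_j)$ rotations; across $p$ layers this contributes $p(2^n-1)$ gates.

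The main step, and the one that carries the problem-dependent part of the bound, is the decomposition of $U(H_C,\gamma_j)=e^{-i\gamma_j H_C}$. Writing $H_C$ in the Ising form \eqref{eq: Ising model for CSGP}, it splits into $(2^n-1)$ commuting single-qubit $h_i\sigma_i^z$ terms (implemented by one $R_Z$ each) and $|\mathcal{I}|=s$ commuting two-qubit $J_{ij}\sigma_i^z\sigma_j^z$ terms. Invoking the standard decomposition cited in the paper \cite{crooks2018performance}, each two-qubit interaction is synthesised from two CNOTs and one $R_Z$, i.e.\ exactly $3$ elementary gates. Thus a single cost-layer costs $(2^n-1)+3s$ gates, and $p$ layers contribute $p(2^n-1)+3ps$ gates. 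A subtle point I would make explicit is that the commutativity of the summands in both $H_B$ and $H_C$ is what allows the factorisation without any Trotter overhead, so no extra multiplicative constant appears.

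Summing the three contributions gives
\begin{align*}
(2^n-1) + p(2^n-1) + \bigl(p(2^n-1)+3ps\bigr) = (2^n-1)(2p+1) + 3ps,
\end{align*}
which yields the claimed $\mathcal{O}\bigl((2^n-1)(2p+1)+3ps\bigr)$ asymptotic bound. The main obstacle is not any hard inequality but rather the faithful accounting of the off-diagonal structure of $\boldsymbol{Q}$: one must be careful to identify $s$ with the number of nonzero entries in the strictly upper (equivalently, strictly lower) triangular part of $\boldsymbol{Q}$, so that each pair $(i,j)\in\mathcal{I}$ is counted once and not twice, and to ensure that no additional gates are hidden in measurement or parameter-update steps, which are classical and therefore not included in the gate complexity.
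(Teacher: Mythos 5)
Your proposal is correct and follows essentially the same route as the paper's own proof: count $(2^n-1)$ Hadamards for state preparation, $(2^n-1)$ $R_Z$ gates plus $3s$ gates (two CNOTs and one $R_Z$ per interaction in $\mathcal{I}$) for each cost layer, and $(2^n-1)$ $R_X$ gates for each mixing layer, then multiply by $p$ and sum to get $(2^n-1)(2p+1)+3ps$. Your added remarks on commutativity (no Trotter overhead) and on counting each pair in $\mathcal{I}$ once via the triangular part of $\boldsymbol{Q}$ simply make explicit what the paper leaves implicit.
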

\begin{proof}
We consider the two Hamiltonians $H_C$ and $H_B$ generated according to the $\boldsymbol{Q}$ matrix of the BILP-Q formulation. 
Starting from $2^n-1$ qubits as input, the first step of QAOA generates an equal superposition of $2^{2^n-1}$ possible states through the use of $2^n-1$ Hadamard gates. Then, for each interaction in $H_C$, three gates (two CNOT gates and a local single-qubit $R_Z$ gate) are employed, plus an additional $R_Z$ applied to each qubit, even with a Ising model with no interactions. We define as $s$ the total number of interactions, which is also equivalent to number of non-zero elements in the upper/lower triangular matrix $\boldsymbol{Q}$ (i.e., the cardinality of the set $\mathcal{I}$ in Eq \eqref{eq: Ising model for CSGP}). The size of the matrix $\boldsymbol{Q}$ is $(2^n-1)\times(2^n-1)$, hence there are $2^{n-1}(2^{n}-3) + 1$ 
off-diagonal elements. Furthermore, as already mentioned in Sec. \ref{sec:background}, the matrix $\boldsymbol{S}$ is heavily sparse by construction and can be further modified by restricting the possible coalitions arbitrarily. Thus, the value $s$ can potentially vary from $2^n-1$ (a single non-zero off-diagonal element per row) to $2^{n-1}(2^{n}-3) + 1$ (total number of the off-diagonal elements in $\boldsymbol{Q}$ which is however impossible to achieve in case of BILP-Q). Finally, $H_B$ is implemented using $2^n-1$ Pauli-$X$ single-qubit rotation gates $R_X$. The two Hamiltonians are iteratively applied $p$ times. 
Thus, the total number of single or two-qubit gates of BILP-Q is the following: 
\begin{equation*}
\underbrace{2^n-1}_{\text{Hadamard}} + 
p \times 
(\underbrace{3\times s + 2^n-1}_{\text{$H_C$}} + \underbrace{2^n-1}_{\text{$H_B$}} ).
\end{equation*}
It is easy to show that this number is equal to  $(2^n-1)( 2p+1) + 3ps$.
\end{proof}
Figure \ref{fig:complexity} illustrates a theoretical comparison of BILP-Q computational cost with respect to the classical solutions.
The blue curve is referred to the time complexity of IDP-BOSS, while the light blue shaded area illustrates the cost (gate complexity) of BILP-Q as the sparsity of $\boldsymbol{Q}$ varies from $2^n-1$ to $2^{n-1}(2^{n}-3) + 1$. BILP-Q outperforms the IDP-BOSS as the number of interactions decreases. Also, varying $p$ from $1$ to $50$ (with fixed $s=2^n-1$), for large problem instances ($n\geq 14$), BILP-Q outperforms IDP-BOSS even with large $p=50$.  
For medium-size problem instances ($n\geq 6$), BILP-Q outperforms IP in all the cases.

Importantly, the comparison between gate complexity and (classical) time complexity is valid only assuming to efficiently train the QAOA using classical optimization, a topic which is still open in quantum computing \cite{guerreschi2019qaoa}.

\section{Experiments}

To test the BILP-Q we performed small-scale experiments on most of the distributions for coalition values usually employed to test classical algorithms. 
In particular, different cooperative games (or coalition games)  are generated by sampling the coalition values $v(\cdot)$ from the following probability distributions: 
agent-based uniform (ABU), agent-based normal (ABN), modified uniform (MU), Normal (N), and Single Valuable Agent with beta (SVA-$\beta$). 
Additionally, the distributions used to benchmark the BOSS algorithm \cite{changder2021boss} are tested: Weibull (W), Rayleigh (R), Weighted random with Chisquare (WRC), F-distribution, Laplace (LAP). 

\begin{table}[ht]
\begin{center}
    \begin{tabular}[\linewidth]{c|ccc|ccc}
    \toprule
     & \multicolumn{3}{c}{\textbf{2 agents}} & \multicolumn{3}{c}{\textbf{3 agents}} \\
    \cmidrule(lr){2-4}\cmidrule(lr){5-7}
    \textbf{Distr.} & \textbf{p}  & \textbf{BILP-Q}  & \textbf{BILP}  & \textbf{p} &  \textbf{BILP-Q}  & \textbf{BILP} \\
            \toprule
\rowcolor{Gray}  ABN        & 1 &   0.033 &        0.062 &   17 &   3.133 &        0.517 \\
                 ABU        & 1 &   0.034 &        0.062 &   10 &   1.642 &        0.624 \\
\rowcolor{Gray}  F          & 3 &   0.086 &        0.063 &    7 &   0.910 &        0.588 \\
                 Laplace    & 1 &   0.120 &        0.065 &   11 &   2.026 &        0.608 \\
\rowcolor{Gray}  MU         & 1 &   0.029 &        0.060 &   10 &   2.360 &        0.641 \\
                 Normal     & 1 &   0.043 &        0.102 &    2 &   0.490 &        0.576 \\
\rowcolor{Gray}  Rayleigh   & 5 &   0.173 &        0.062 &   12 &   2.491 &        0.603 \\
                 SVA-$\beta$& 2 &   0.053 &        0.080 &    6 &   0.883 &        0.567 \\
\rowcolor{Gray}  WRC        & 2 &   0.055 &        0.068 &    6 &   1.582 &        0.529 \\
                 Weibull    & 1 &   0.042 &        0.070 &    6 &   1.499 &        0.601 \\
                 [0.1cm]
                \bottomrule
    \end{tabular}%
        \vspace{1em}
        {\caption{Results of BILP-Q. The table shows the optimal parameter $p$ for each problem instance, the time to train the QAOA in the BILP-Q formulation, and the time required to solve the CSGP classically after formulating it as a BILP (Eq. \eqref{eq: CSGP as BILP},\eqref{eq: constraint CSGP as BILP}). The times of BILP and BILP-Q are reported in milliseconds $(ms)$. \vspace{-2em}}
            \label{tab:QAOA_results}}
    \end{center}
\end{table}

Within BILP-Q which adopts the QAOA as quantum solution, the number of agents of the game is restricted only to $2$ and $3$, due to the limitation in training large instances of this quantum algorithm using quantum simulation \cite{willsch2020benchmarking}. 
Specifically, the QAOA is trained using IBM Qiskit on the QASM simulator, a backend that simulates the execution of a quantum algorithm in a fault-tolerant setting. The number of measurements for each run of the quantum circuits is fixed to $1024$. The execution times for each problem instance are reported in Table \ref{tab:QAOA_results}.

We can observe that the optimal parameter $p$ is strictly related to the specific problem instance to be solved and, as expected, it has a relevant impact on the time required to train the QAOA, since it increases the depth of the quantum circuit proportionally to $p$. In fact, calculating the Pearson's correlation coefficient between the optimal parameter $p$ and the time required to find the optimal solution with BILP-Q (Table \ref{tab:QAOA_results}), a value of $0.96\%$ is obtained. This means that as long as the value of $p$ grows, the time to find the optimal solution increases linearly to it.
Also, on average, the optimal parameter $p$ is significantly higher as the number of agents increases. This might be due to the number of qubits required $(2^n-1)$ which increases from $3$ to $7$ when solving the the problem with $2$ and $3$ agents respectively.  
Importantly, since QAOA in the context of BILP-Q has been trained on a quantum simulator, the execution times reported in Table \ref{tab:QAOA_results} are just an indication on how BILP-Q would work on gate-based quantum computation on different problem instances.

Furthermore, since quantum annealing can be adopted as a heuristic solution for QUBO problems, BILP-Q formulation has been tested using a real quantum annealer (D-Wave) \cite{cohen2014d} for problem instances with a number of agents from $2$ to $7$. Results are shown in Figure \ref{fig:dwave_results}. 

We can observe an increasing trend in the quantum annealer time as the number of agents increases. In particular, this tendency seems to be exponential with respect to $n$ for all the distributions, as testified by the light green curve which describes the function $f(n) = 2^n+60$. 
This is an interesting finding since it represents a better results than the state-of-the-art IDP-BOSS which scales as $\mathcal{O}(3^n)$. Nonetheless, it is necessary to perform further experiments to verify whether these times hold even with a larger number of agents.
\begin{figure}[ht]
\centering
\includegraphics[scale=.5]{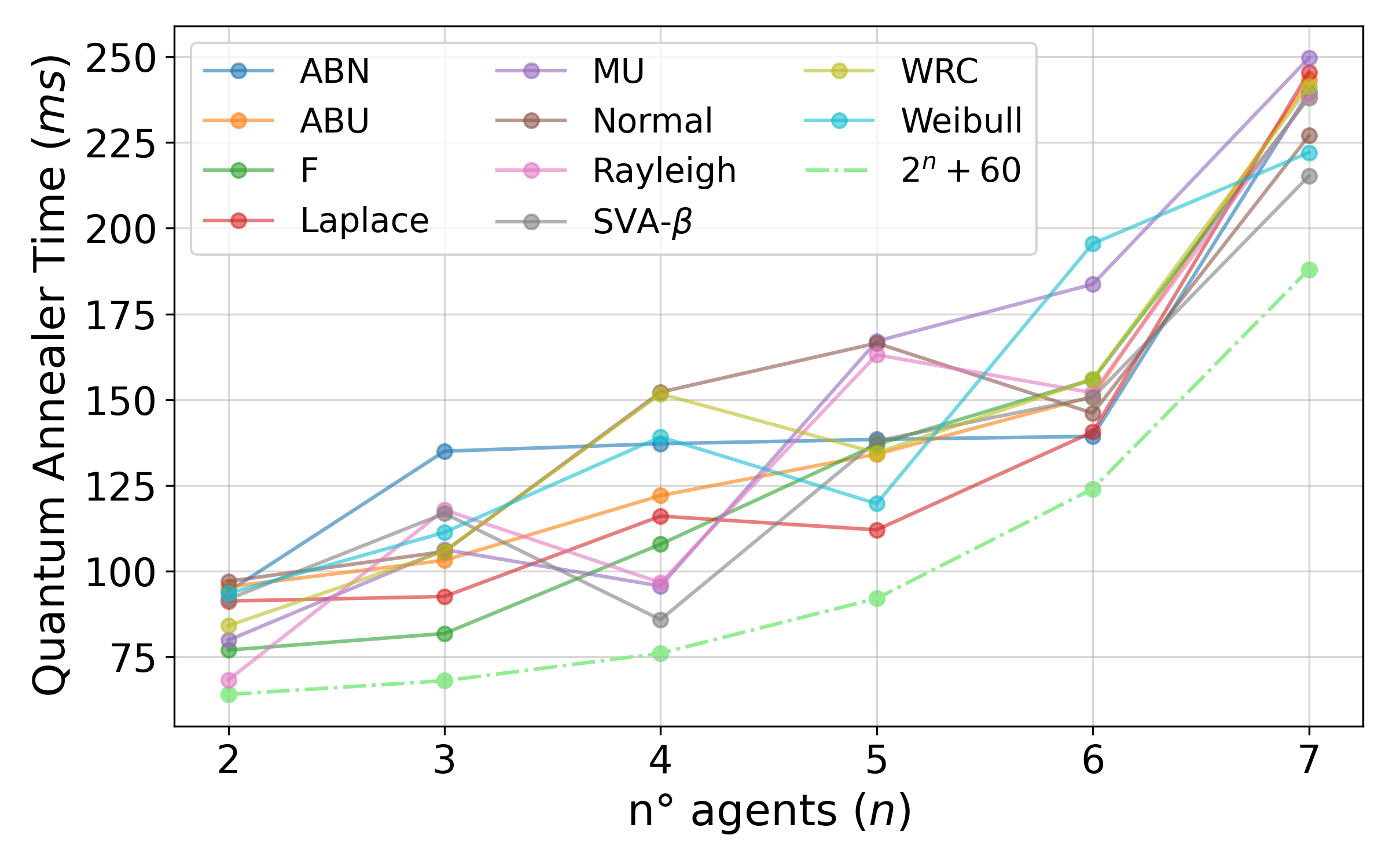} 
\caption{Results of BILP-Q using a D-Wave quantum annealer.
}
\label{fig:dwave_results}
\end{figure}

In general, for both quantum implementations of BILP-Q, the specific coalition values seem not to be particularly relevant for the time required to get the solution. As expected, the number of agents $n$ and the parameter $p$ (for QAOA) have a huge impact. However, these results do not represent an exhaustive evaluation of BILP-Q but only a proof-of-concept on different possible coalition games. Thus, further experiments need to be performed to test in practice its potential advantage over classical solutions.

\section{Conclusion}

In this work, we proposed BILP-Q, a quantum algorithmic approach for the Coalition Structure Generation problem. In particular, we presented a general QUBO formulation for CSGP that can be solved using different types of quantum devices. Furthermore, we compared the computational (gate) complexity of BILP-Q against the most adopted classical counterparts, showing that the CSGP may be an ideal setting for leveraging quantum computation in AI.

Also, we demonstrated the adoption of BILP-Q for small-size problems using gate-based quantum simulation and tested it on a real D-Wave quantum annealer when increasing the number of agents of the coalition game up to $7$.

Future studies will be dedicated to identifying specific cases for CSGP where it is possible to leverage the advantages of BILP-Q in real-world applications, especially those related to Constrained Coalition Formation \cite{bistaffa2017algorithms}. Finally, the implementation of QAOA for larger problem instances will be investigated using ad-hoc inizialization procedures \cite{egger2021warm} as well as custom optimization strategies for quantum annealing.

\section*{Code Availability}

All code to generate the data, figures, analyses, as well as, additional technical details on the experiments are publicly available at \href{https://github.com/supreethmv/BILP-Q}{\textcolor{blue}{https://github.com/supreethmv/BILP-Q}}.

\section*{Acknowledgments}

This work has been funded by the German Ministry for Education and Research (BMB+F) in the project QAI2-QAICO under grant 13N15586.

\bibliographystyle{unsrt}  
\bibliography{references}  






\end{document}